\documentclass[prd,aps,twocolumn,preprintnumbers, showpacs, nofootinbib,superscriptaddress,notitlepage]{revtex4-1}
\usepackage{amsmath,graphicx,epsfig,amssymb}
\usepackage{inputenc}
\usepackage[usenames]{color}
\usepackage{ulem} 
\usepackage{bigstrut}
\usepackage{slashed}
\usepackage{multirow}
\usepackage{subfigure}
\usepackage{dsfont}
\usepackage{xcolor}
\usepackage{hyperref}
\usepackage{comment}
\usepackage{amsthm}
\usepackage{cleveref}
\newtheorem{theorem}{Theorem}[section]

\allowdisplaybreaks

\newcommand{\norm}[1]{\left\lVert #1 \right\rVert}

\begin{document}

\title{Ill-Posedness in Limited Discrete Fourier Inversion and Regularization for Quasi Distributions in LaMET}

\author{Ao-Sheng Xiong}
\affiliation{Frontiers Science Center for Rare Isotopes, and School of Nuclear Science and Technology, Lanzhou University, Lanzhou 730000, China}

\author{Jun Hua}
\email{Corresponding author: junhua@scnu.edu.cn}
\affiliation{Key Laboratory of Atomic and Subatomic Structure and Quantum Control (MOE), 
Guangdong Basic Research Center of Excellence for Structure and Fundamental Interactions of Matter, 
Institute of Quantum Matter, South China Normal University, Guangzhou 510006, China}
\affiliation{Guangdong-Hong Kong Joint Laboratory of Quantum Matter, 
Guangdong Provincial Key Laboratory of Nuclear Science, Southern Nuclear Science Computing Center, 
South China Normal University, Guangzhou 510006, China}

\author{Ting Wei} 
\affiliation{School of Mathematics and Statistics, Lanzhou University, Lanzhou 730000, China}

\author{Fu-Sheng Yu}
\affiliation{Frontiers Science Center for Rare Isotopes, and School of Nuclear Science and Technology, Lanzhou University, Lanzhou 730000, China}
\author{Qi-An Zhang}
\email{Corresponding author: zhangqa@buaa.edu.cn}
\affiliation{School of Physics, Beihang University, Beijing 102206, China}
\author{Yong Zheng}
\affiliation{Frontiers Science Center for Rare Isotopes, and School of Nuclear Science and Technology, Lanzhou University, Lanzhou 730000, China}

\begin{abstract}
We systematically investigated the limited inverse discrete Fourier transform of the quasi distributions from the perspective of inverse problem theory. This transformation satisfies two of Hadamard's well-posedness criteria, existence and uniqueness of solutions, but critically violates the stability requirement, exhibiting exponential sensitivity to input perturbations. To address this instability, we implemented Tikhonov regularization with L-curve optimized parameters, demonstrating its validity for controlled toy model studies and real lattice QCD results of quasi distribution amplitudes. 
The reconstructed solutions is consistent with the physics-driven $\lambda$-extrapolation method. 
Our analysis demonstrates that the inverse Fourier problem within the large-momentum effective theory (LaMET) framework belongs to a class of moderately tractable ill-posed problems, characterized by distinct spectral properties that differ from those of more severely unstable inverse problems encountered in other lattice QCD applications.
Tikhonov regularization establishes a rigorous mathematical framework for addressing the underlying instability, enabling first-principles uncertainty quantification without relying on ansatz-based assumptions.
\end{abstract}

\maketitle


\section{Introduction}\label{sec:introduction}

In the past few decades, lattice QCD has provided the most systematic first-principles framework for investigating hadronic structures, including parton distribution functions (PDFs), light-cone distribution amplitudes (LCDAs), generalized parton distributions (GPDs), and transverse momentum distributions (TMDs), etc. However, constrained by its formulation in Euclidean space with imaginary time, the standard lattice QCD approach cannot be applied for directly calculating the dynamical correlation functions of quark and gluon fields with real-time dependence defined on the light-front (LF). 

A traditional approach to circumvent this limitation involves the operator product expansion (OPE), through which a few low-order moments can be calculated \cite{Martinelli:1987zd,Detmold:2001dv,LHPC:2002xzk,Detmold:2005gg,Braun:2006dg,Braun:2015axa,Chambers:2017dov,RQCD:2019osh,Shindler:2023xpd}. 
In recent years, LaMET \cite{Ji:2013dva,Ji:2020ect} has introduced an alternative choice, enabling direct reconstruction of the full momentum fraction dependence for LF distributions of hadrons via quasi-distributions combined with systematic matching procedures derived from the effective theory framework \cite{Xiong:2013bka,Lin:2014zya,Ji:2014hxa,Ji:2014lra,Sufian:2014jma,Ji:2015qla,Xiong:2015nua,Alexandrou:2016eyt,Chen:2016utp,Lin:2016qia,Yang:2016nfc,Yang:2016plb,Alexandrou:2017dzj,Alexandrou:2017huk,Alexandrou:2017qpu,Chen:2017mie,Chen:2017mzz,Constantinou:2017sej,Ishikawa:2017iym,Ji:2017oey,Ji:2017rah,Lin:2017ani,Stewart:2017tvs,Xiong:2017jtn,Zhang:2017bzy,Zhang:2017zfe,Alexandrou:2018eet,Alexandrou:2018pbm,Alexandrou:2018yuy,Chen:2018xof,Ebert:2018gzl,Fan:2018dxu,Izubuchi:2018srq,Ji:2018hvs,LatticeParton:2018gjr,Xu:2018mpf,Lin:2018pvv,Liu:2018hxv,Liu:2018tox,Zhang:2018diq,Zhang:2018nsy,Zhang:2018rls,Zhao:2018fyu,Alexandrou:2019dax,Alexandrou:2019lfo,Chai:2019rer,Chen:2019lcm,Constantinou:2019vyb,Ebert:2019tvc,Liu:2019urm,Shanahan:2019zcq,Wang:2019msf,Zhang:2019qiq,Alexandrou:2020qtt,Alexandrou:2020uyt,Alexandrou:2020zbe,Bhattacharya:2020cen,Bhattacharya:2020jfj,Chai:2020nxw,Chen:2020arf,Chen:2020iqi,Chen:2020ody,Ebert:2020gxr,Fan:2020nzz,Gao:2020ito,Hua:2020gnw,Ji:2020brr,Ji:2020ect,Ji:2020jeb,LatticeParton:2020uhz,Lin:2020fsj,Lin:2020rxa,Lin:2020ssv,Shanahan:2020zxr,Shugert:2020tgq,Vladimirov:2020ofp,Zhang:2020dkn,Zhang:2020gaj,Zhang:2020rsx,Alexandrou:2021bbo,Alexandrou:2021oih,Bhattacharya:2021moj,Bhattacharya:2021rua,Constantinou:2021nbn,Dodson:2021rdq,Gao:2021hxl,Gao:2021dbh,LatticePartonLPC:2021gpi,Li:2021wvl,Lin:2021brq,Lin:2021ukf,Scapellato:2021uke,Schlemmer:2021aij,Shanahan:2021tst,Bhattacharya:2022aob,Constantinou:2022fqt,Deng:2022gzi,Ebert:2022fmh,Gao:2022iex,Gao:2022uhg,LatticeParton:2022xsd,LatticeParton:2022zqc,LatticePartonCollaborationLPC:2022myp,LatticePartonLPC:2022eev,Scapellato:2022mai,Schindler:2022eva,Zhang:2022xuw,Alexandrou:2023ucc,Avkhadiev:2023poz,Bhattacharya:2023jsc,Bhattacharya:2023nmv,Bhattacharya:2023tik,Cichy:2023dgk,Deng:2023csv,Gao:2023ktu,Gao:2023lny,Holligan:2023jqh,Holligan:2023rex,Ji:2023pba,LatticeParton:2023xdl,LatticePartonLPC:2023pdv,Lin:2023gxz,Zhao:2023ptv,Liu:2023onm,Avkhadiev:2024mgd,Baker:2024zcd,Bollweg:2024zet,Chen:2024rgi,Cloet:2024vbv,Ding:2024saz,Gao:2024fbh,Good:2024iur,Han:2024min,Holligan:2024umc,Holligan:2024wpv,Ji:2024hit,LatticeParton:2024mxp,Wang:2024wwa,LatticeParton:2024vck,LatticeParton:2024zko,Miller:2024yfw,Mukherjee:2024xie,Spanoudes:2024kpb,Zhang:2024omt,Bollweg:2025ecn,Bollweg:2025iol,Wang:2025uap,Han:2025odf,Holligan:2025ydm,LPC:2025spt,Zhang:2025hvf,Ji:2025mvk,Chen:2025cxr}.
Although OPE achieves remarkable precision for low-order moments, the scarcity of accessible moments (typically up to the fourth moment in practical calculations) fundamentally restricts the resolution of complete momentum fraction dependence due to the inherent inverse problem. Conversely, LaMET achieves continuum momentum fraction resolution based on three critical advancements: 1) a reliable nonperturbative renormalization of the quasi distributions, 2) a physics-driven systematic extrapolation for the long-range correlations, and 3) an efficient large-momentum expansion of the light-cone observables.

Inverse problems constitute a foundational challenge in lattice QCD, stemming primarily from their inherently ill-posed mathematical nature. The typical manifestation of this challenge emerges in the analytic continuation from Euclidean to real-time observables, exemplified by spectral function reconstruction from Euclidean correlators \cite{Li:2020xrz,Wang:2021jou}. This procedure requires solving the ill-conditioned inverse problem: 
\begin{align}
    G_E(\tau) = \int_0^\infty \frac{d\omega}{\pi} K(\tau,\omega) \rho(\omega) 
\end{align}
where $K(\tau, \omega)=\frac{\omega}{\omega^2+\tau^2}$, and $\rho(\omega)$ denotes the sought-after spectral function. Three compounding factors amplify the ill-posedness: (1) exponential damping of high-frequency information in Euclidean time $(\tau)$, (2) discrete and noise-contaminated lattice data for $G_E(\tau)$, and (3) complex and non-analytic structures (e.g., narrow resonances, threshold discontinuities) in $\rho(\omega)$. Analogous inverse problems plague QCD sum rules when implementing dispersion relations between OPE and hadronic spectral densities \cite{Zhao:2024drr,Li:2020ejs}.

To address these challenges, regularization methods utilize the Maximum Entropy Method (MEM) or Bayesian inference frameworks incorporating prior distributions to stabilize the inversion process through systematic constraint implementation, thereby enabling physically viable spectral reconstructions with quantified uncertainties. However, these inverse problems remain particularly intractable due to the coexistence of narrow resonance structures and continuum contributions in QCD spectral functions. For example, in \cite{Asakawa:2000tr} the MEM approach has produced a very precise prediction for the peak position of the ground state, while it exhibits exponentially growing errors for continuum structures.

The inverse problem in reconstructing partonic distributions from moments has persisted as a fundamental puzzle \cite{Li:2022qul}. Recently, concerns have also been raised that LaMET methods may inherit analogous challenges \cite{Dutrieux:2025jed}, rooted in the mathematical structure of the finite-range Fourier transformation. However, the aforementioned work overlooked the non-static nature of ill-posedness inherent in this inverse problem. The instability of Fourier transforms is intrinsically constrained by both data convergence and precision. From a physical perspective, imposing QCD-motivated constraints on long-range coordinate-space correlations, while allowing for systematic extrapolation in this regime, can enhance data convergence, thereby mitigating ill-posedness \cite{Chen:2025cxr}. From a mathematical standpoint, the critical questions are: (1) whether the inverse problem is genuinely ill-posed, and more importantly, (2) whether its degree of ill-posedness reaches a severity that renders mathematical treatment impractical. This underscores the necessity to recognize that not all inverse problems exhibit equivalent levels of pathological instability.

In this paper, we systematically investigate the mathematical framework of ill-posed inverse problems, beginning with their rigorous formulation. More precisely, we analyze the limited discrete Fourier transformation inherent in quasi distributions as a concrete manifestation of such inverse problems. To resolve this ill-posedness, we implement the Tikhonov regularization method, a gold-standard approach in inverse problem theory, with particular emphasis on quantifying solution stability. Our demonstration employs both synthetic toy models (with controlled error structures) and real lattice QCD data for the pion quasi distribution amplitude (quasi DA), providing complementary validation perspectives. The paper is organized as follows: Sec. \ref{sec:Mathematical framework} establishes the mathematical foundations of inverse problems and introduces the Tikhonov regularization paradigm for their systematic treatment. Sec. \ref{sec:Toy-model} conducts rigorous testing of the regularization method for toy models featuring both correlated and uncorrelated noise profiles. Sec. \ref{sec:Pion-quasi-DA} presents the  application of this framework to state-of-the-art lattice data for pion quasi DAs, revealing practical implementation challenges and solutions. We conclude in Sec. \ref{sec:Summary-outlook} and discuss promising inverse problem techniques for partonic structure calculations. The detailed proof of uniqueness is provided in Appendix \ref{sec:Proof of the uniqueness}.


\section{Mathematical framework}\label{sec:Mathematical framework}
\subsection{The definition of inverse problems}

To establish rigorous understanding of the inverse problem, we first contextualize their fundamental duality with forward problems. Forward problems involve computing the output from specified inputs and known physical or mathematical models, essentially determining effects from known causes. For example, a forward problem systematically computes observable outputs $y \in Y$ of deterministic operator $\mathcal{F}: \Theta \to Y$ acting on input parameters $\theta \in \Theta$, mathematically expressed as
\begin{align}
    \mathcal{F}(\theta) = y.
\end{align}
This represents causal prediction from complete physical characterization.  Conversely, inverse problems attempt to reconstruct unknown parameters $\theta$ from indirect observations $y^{\delta}$ through operator inversion $\mathcal{F}^{-1}$, where superscript $\delta$ denotes data perturbation satisfying $\|y - y^{\delta}\|_Y \leq \delta$. Such inverse formulations pervade scientific domains \cite{Application in sci and eng,inverse problem application,nonlinear problem application}, particularly when direct parameter measurement proves technologically constrained or economically impractical. A typical scenario arises when certain parameters of a system cannot be directly measured or when the costs of measurement become prohibitive. In such cases, inferring system parameters from measurable experimental data using inversion techniques becomes a possible option.

The critical step to identify an inverse problem is to check Hadamard's well-posedness criteria \cite{Kirsch-2011}: 1) {\bf Existence}: for every $y\in Y$ there is (at least one) $\theta\in \Theta$ such that $\mathcal{F}(\theta) = y$; 2) {\bf Uniqueness}: for every $y\in Y$ there is at most one $\theta\in \Theta$ with $\mathcal{F}(\theta) = y$; and 3) {\bf Stability}: the solution $\theta$ depends continuously on $y$; that is, for every sequence $(\theta_n) \subset \Theta$ with $\mathcal{K}\theta_n \to \mathcal{K}\theta (n\to \infty)$, it follows that $\theta_n\to \theta (n\to \infty)$. Problems violating any criterion are termed {\bf ill-posed} \cite{Engl-1996}. Inverse problems intrinsically manifest ill-posedness through: solution non-uniqueness arising from underdetermined measurements, or nonexistence of a specific solution due to overdetermined noisy data, and instability via discontinuous dependence on $y^{\delta}$ where measurement errors $\delta$ significantly amplify in parameter space.

Fundamental computational challenges emerge when applying classical methods to ill-posed problems. First, the operator $\mathcal{F}$ mapping the data space $\Theta$ to the parameter space $Y$ is not a bijection, leading to the failure of existence or uniqueness. Second, the operator $\mathcal{F}$ lacks a bounded inverse operator, such that even small perturbations $\delta$ result in significant variations in the solution. These limitations stem from inherent operator ill-posedness rather than theoretical unsolvability. They can be effectively addressed by applying mathematical regularization techniques, which aim to resolve the ill-posedness and yield a unique and stable approximate solution \cite{linear inverse problem with discrete data I,linear inverse problem with discrete data II}. 

Methods for obtaining unique stable approximate solutions to ill-posed inverse problems are collectively known as regularization methods. Over the past few decades, a wide variety of regularization methods have been developed for inverse problems \cite{Regularization-tools,Modern regularization methods}. Typically, the choice of method depends on the specific nature of the inverse problem being addressed. In this paper, we focus on the widely adopted Tikhonov regularization method which serves as an exemplary framework to clearly illustrate both the ill-posed nature of inverse problems and the fundamental effectiveness of regularization techniques. Furthermore, it has been rigorously proven that as the error in the input data approaches zero, the solution obtained via Tikhonov regularization converges to the true solution \cite{Tik-L-curve}. These mathematical properties make solving ill-posed problems no longer insurmountable. Specifically, regularization theory resolves this through controlled problem modification, constructing solution families $\{\theta_{\alpha}^{\delta}\}$ parameterized by $\alpha > 0$ that satisfy:
\begin{align}
    \lim_{\substack{\delta \to 0 \\ \alpha \to 0}} \|\theta_{\alpha}^{\delta} - \theta_{\mathrm{true}}\|_\Theta = 0
\end{align}
and Tikhonov regularization  exemplifies this approach through constrained optimization
\begin{align}
    \theta_{\alpha}^{\delta} = \arg\min_{\theta} \|\mathcal{F}(\theta) - y^{\delta}\|^2_Y + \alpha\|\Gamma\theta\|^2_\Theta,
\end{align}
where $\Gamma$ encodes prior solution constraints. Its effectiveness for limited Fourier inversion stems from three mechanisms: spectral control via parameter $\alpha$, natural compatibility with the priors, and error-adaptive convergence  where solution accuracy improves with reduced data error $\delta$ and increased solution regularity (Sobolev differentiability \cite{Computational methods for IP}).

The efficiency of regularization methods depends on both the priori smoothness characteristics of the true solution (quantifiable through H\"older continuity or Sobolev regularity metrics) and the posteriori error magnitude in the measured data. In general, the smaller the input errors, and the smoother the true solution, the easier it is to solve the ill-posed problem \cite{Kirsch-2011,Engl-1996}. Additionally, the reconstruction fidelity improves with both decreasing data uncertainties and increasing solution regularity. Fortunately, our analysis benefits from two inherent advantages of hadrons partonic distributions (such as PDFs or LCDAs): their positivity and analyticity properties in momentum space, combined with the convergence of corresponding coordinate-space correlation functions. The latter exhibit exponential decay with increasing correlation length, ensuring robust convergence within finite spatial domains \cite{Chen:2025cxr}. These characteristics enable significantly enhanced tractability of the limited discrete Fourier transform problem for quasi distributions when input errors remain well-controlled.

\subsection{The inversion for limited Fourier transform}

Within the framework of short-distance factorization \cite{Radyushkin:2017cyf,Orginos:2017kos,Dutrieux:2025jed}, the inverse problem is universally recognized as both inherent and critical due to the limitations of reliable data obtained from lattice QCD calculations in coordinate space, specifically, constraints in both the finite precision of the data and the restricted correlation lengths \cite{Dutrieux:2025jed,Braun:2007wv,Radyushkin:2017cyf,Ma:2017pxb}. The ill-posed nature of this inverse problem fundamentally originates from the inversion of limited discrete Fourier transformation.

The limited inverse Fourier transform takes the form
\begin{align}
    g(\lambda) = \int dx e^{i x \lambda} f(x), \label{eq:limit_Fourier_Transform}
\end{align}
where $\lambda$ denotes the Lorentz-invariant parameter in coordinate space (also referred to  as the "Ioffe time" \cite{Orginos:2017kos,Radyushkin:2017cyf}), $g(\lambda)$ denotes the renormalized matrix element of a nonlocal Euclidean equal-time correlation operator, while $f(x)$ represents the corresponding momentum-space parton distribution with momentum fraction $x$. When the complete distribution of $g(\lambda)$ is precisely known, one can easily obtain the result of $f(x)$ through Fourier inversion. However, due to the limitations of lattice QCD calculations, one can only obtain results for $g(\lambda)$ at discrete and finite values of $\lambda$: $\lambda\in\{\lambda_0, \lambda_1, \cdots, \lambda_n\}$ with $\lambda_k\leq\lambda_n, ~k=0,1, \cdots, n$. Mathematically, Eq.~(\ref{eq:limit_Fourier_Transform}) represents a Fredholm integral equation of the first kind $g=\mathcal{K}f$, where the integral operator $\mathcal{K}$ indicates that the integral kernel $e^{i x \lambda}$ acts on the solution $f(x)$. The inverse problem of recovering $f(x)$ from noisy integral measurements $g^{\delta}(\lambda)$ (where $ \| g-g^{\delta} \|_{l^2} \leq\delta$ and $l^2$ denotes the space of square-summable sequences) represents a canonical example of ill-posed operator inversion \cite{Xiong:2022uwj,Asakawa:2000tr}.

More specifically, the discretization of lattice QCD simulations necessitates employing discrete Fourier transforms (DFT) formally expressed as followed:
\begin{align}{\label{eq:discrete_Fourier_Transform}}
   g(\lambda) =  K f(x),    
\end{align}
with DFT kernel matrix $ K$, which is explicitly constructed as:
\begin{align}{\label{eq:Complex_matrix_K}}
     K = \delta x \cdot \exp\left(i \mathbf{X}{n_x} \otimes \mathbf{\Lambda}{n_\lambda}^\top\right),
\end{align}
with discrete parameters defined by:
\begin{align}
\mathbf{X}{n_x} &= \left\{x_{\mathrm{min}},\ x_{\mathrm{min}}+\delta x,\ \ldots,\ x_{\mathrm{max}} \right\}, \nonumber\\
\mathbf{\Lambda}{n\lambda} &= \left\{ -\lambda_{\mathrm{max}},\ -\lambda_{\mathrm{max}}+\delta\lambda,\ \ldots,\ \lambda_{\mathrm{max}} \right\}, 
\end{align}
where $\delta x$ and $\delta\lambda$ represent the resolutions in momentum and coordinate space.

A fundamental prerequisite for solving inverse problems lies in establishing their well-posedness according to Hadamard's criteria. For the limited Fourier transform problem in Eq.~(\ref{eq:limit_Fourier_Transform}), we first analyze its existence and uniqueness. The Paley-Wiener theorem \cite{W-P theorem} establishes existence under specific analyticity constraints: if $f(x)$ possesses compact support in $x\in[x_{\mathrm{min}},x_{\mathrm{max}}]$ and its Fourier transform is an entire function on the complex plane that decays exponentially, then the solution $f(x)$ satisfies existence. On the other hand, the partonic distributions of hadrons exhibit precisely such required behavior, demonstrating exponential decay at large spatial separations which guarantees that exponentially decaying correlators correspond to analytic momentum distributions with finite support\cite{Ji:2020brr,LatticePartonLPC:2021gpi,Gao:2021dbh,Zhang:2023bxs}
\begin{align}
    g(\lambda)\propto e^{-m_{\mathrm{eff}}|z|},\quad |z|\to\infty 
\end{align}
where $m_{\mathrm{eff}}$ characterizes an effective mass governing the exponential decay behavior at large $\lambda$. The rigorous proof of the uniqueness theorem \ref{thm:The Proof of the uniqueness} is provided in the Appendix. The solution space is defined as the space of square-integrable functions, which is consistent with the behavioral properties of quasi-DA.  Although the theoretical existence and uniqueness are guaranteed under ideal conditions, practical implementations face significant challenges due to both statistical and systematic errors. The statistical uncertainty $\delta g$, which scales as $\mathcal{O}(1/\sqrt{N_{\mathrm{cfg}}})$ with the number of lattice configurations, combines with systematic errors arising from the truncation at $\lambda<\infty$ to critically impact numerical convergence.

To quantify the numerical instability inherent in discrete Fourier inversion, we analyze the real component of the matrix $K_{\mathrm{re}}$ through the standard singular value decomposition (SVD) without any truncation. The SVD formalism decomposes the matrix as follows:
\begin{align}\label{SVD of matrix K}
K_{\text{re}} = U \Sigma V^T = \sum_{i=1}^{n} \sigma_i u_i  v^T_i, \quad n = \min(n_\lambda, n_x),
\end{align}
where $U = [u_1, \ldots, u_n] \in \mathbb{R}^{n_\lambda \times n}$ and $V=[v_1, \ldots, v_n] \in \mathbb{R}^{n_x \times n}$ denote singular vectors with orthonormal columns, and satisfies $U^T U = V^T V = I_n$. $\Sigma=\text{diag}(\sigma_1, \ldots, \sigma_n)$ denotes the singular values of $K_{\mathrm{re}}$ ordered as $\sigma_1 \geq \cdots \geq \sigma_n \geq 0$ \cite{Singular values and condition numbers}. The formal solution to $g=K_{\mathrm{re}}f$ becomes
\begin{align}\label{eq:SVD solution}
f = \sum_{i=1}^n \frac{u_i^T g}{\sigma_i} v_i.
\end{align}
The condition number $\kappa(K_{\text{re}})$, defined as $\sigma_1/\sigma_n$, quantifies the numerical stability of the system. When the largest singular value $\sigma_1$ and the smallest $\sigma_n$ differ by orders of magnitude, the resulting large condition number induces severe numerical instability in the reconstructed solution $f(x)$, as high-frequency noise is substantially amplified during the solution process.

\begin{figure}
\centering
\includegraphics[width=1.1\linewidth]{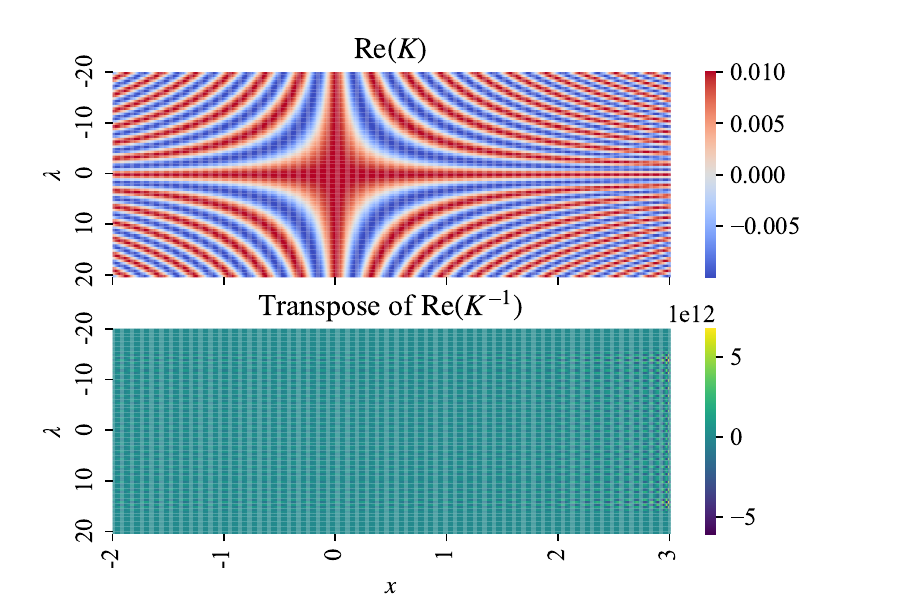}
\caption{Illustration of the matrix $K_{\mathrm{re}}$ and its generalized inverse. For clarity, the lower panel displays the transpose of $K_{\mathrm{re}}^{-1}$. The parameters correspond to those specified in Eq.~(\ref{eq:inputs_for_DFT}). It should be noted that the amplitude in the upper panel is on the order of 0.01, whereas after inversion, the magnitude in the lower panel increases to the order of $10^{12}$. This amplification originates from the singularity inherent in the inverse DFT.}
 \label{fig:matrix-K-and-its-inverse}
\end{figure}

To elucidate this instability mechanism, we present a numerical example with the following parameters:
\begin{align}\label{eq:inputs_for_DFT}
&x_{\mathrm{min}}=-2,~x_{\mathrm{max}}=3,~\delta x=0.01,\\
        & \lambda_{\mathrm{min}}=-20,~\lambda_{\mathrm{max}}=20,~\delta \lambda=0.5,
\end{align}
where the matrix $K_{\mathrm{re}}$ and its generalized inverse $K_{\mathrm{re}}^{-1}$ exhibit the spectral structure shown in Fig.~\ref{fig:matrix-K-and-its-inverse}. The singular spectrum of $K_{\mathrm{re}}$ decays as:
\begin{align}
    & \sigma_1=0.3545 \geq \cdots \geq \sigma_n=3.8\times 10^{-18},
\end{align}
yielding a huge condition number $\kappa(K_{\text{re}})=\sigma_1/\sigma_n\sim10^{17}$. This ill-conditioning arises from near-linear dependencies between matrix rows/columns, which is regarded as a hallmark of discrete Fourier systems. A larger condition number indicates a more unstable matrix $K_{\mathrm{re}}$ \cite{Engl-1996, Regularization-tools}, as can be easily observed from its SVD representation  Eq.~(\ref{eq:SVD solution}). Furthermore, a larger condition number also corresponds to a faster decay of the singular values, both of which are indicative of instability in the solution. So the inverse DFT is unstable. It is worth noting that the singular values decay more slowly than exponentially. Mathematically, this type of instability is considered mild and can be effectively handled by regularization methods \cite{Kirsch-2011,Engl-1996}. Using SVD analysis and Picard's criterion \cite{Tik for Fredholm equation}, we quantify the instability:
\begin{align}{\label{eq:diff-before-regularization}}
     \norm{f^{\delta} - f_t}_{l^2}^2 \equiv  \sum_{i=1}^{n} (\frac{u_i^T   (g^\delta-g)}{\sigma_i}v_i)^2 = \sum_{i=1}^{n} (\frac{u_i^T \delta}{\sigma_i} v_i)^2 \to \infty,
\end{align}
where $g$ denote the exact error-free input data, and $f_t$ denotes its true solution. $g^\delta=g+\delta$ represents noise-corrupted data with $\delta \sim \mathcal{N}(0,\epsilon^2)$, and $f^\delta$ is the solution corresponding to the noisy data. 
The continuum limit of vanishing dispersions (\(\delta x, \delta\lambda \to 0\)) exposes intrinsic mathematical pathologies of the non-regularized inversion. As the discretization is refined, inverse of the singular values $\sigma_i$ of the transformation matrix become unbounded ($\sigma_i^{-1} \to \infty$), inducing a divergence between solutions reconstructed from noise-contaminated data $g^\delta$ and the true solution $f_t$.
Consequently, traditional solution techniques (such as SVD without regularization) are inadequate, as arbitrarily small measurement errors $\|\delta\|_{l^2}$ become extremely amplified through high-frequency singular modes.

\subsection{Tikhonov regularization method}
To address such ill-posed inverse problems, we apply the Tikhonov regularization method~\cite{Xiong:2022uwj}. The basic idea is to find the minimizer $f_\alpha^\delta$, known as the regularized solution, by minimizing the following Tikhonov functional
\begin{align}{\label{Tikhonov Fountional}}
   J(f) = \norm{Kf -g^{\delta}}_{l^2}^2 + \alpha \norm{f}^2_{l^2},
\end{align}
where 
$\alpha >0$ denotes the regularization parameter. By taking the variation of the above functional, one find that the regularized solution satisfies the following equation
\begin{align}\label{eq: (KK+aI)f=Kg}
     (K^{\dagger} K + \alpha I)f^{\delta}_{\alpha} = K^{\dagger} g^{\delta},
\end{align}
where $K^{\dagger}$ is the transpose of matrix $K$ and $I$ is a diagonal identity matrix \cite{Tik for Fredholm equation,Tik-L-curve}. From the Eq.~(\ref{eq: (KK+aI)f=Kg}), it is clear that the Tikhonov regularization method works by introducing a small regularization parameter $\alpha$ to mitigate the effects of ill-posedness caused by a large condition number, i.e., 
\begin{align}
 \frac{\sigma_1}{\sigma_n} \Rightarrow    \frac{\sigma_1 + \alpha }{ \sigma_n + \alpha} \approx \frac{\sigma_1}{\alpha},
\end{align}
This basic form and mechanism are quite similar to the use of regulators in perturbative QCD.

Moreover, as seen in the Tikhonov functional, the regularization parameter $\alpha$ significantly influences the outcome. If $\alpha$ is too large, the solution may underfit the data, leading to poor convergence. Conversely, if $\alpha$ is too small, the solution becomes unstable, resulting in overfitting. An appropriate choice of $\alpha$ should minimize the discrepancy between the regularized solution $f^{\delta}_{\alpha}$ and the true solution $f_t$, given a fixed level of input error. Therefore, a careful and rigorous determination of $\alpha$ is essential \cite{Kirsch-2011, Engl-1996}.

Mathematically, one can choose the regularization parameter by incorporating prior knowledge of the solution, which offers guidance on the convergence behavior of the regularization method.
If the prior information of the true solution $f_t$ is known, i.e., $f_t=K^{\dagger} v, \|v\|_{l^2} \leq E$,  we have,
\begin{align}
     \norm{f_{\alpha}^{\delta} -f_t}_{l^2} \leq \frac{\delta}{2 \sqrt{\alpha}} + \frac{\sqrt{\alpha} E}{2},
\end{align}
where $f_t$ is the true solution, $\delta$ is the error level and $E$ is the upper bound of the true solution. 
Then, $\alpha$ can be chosen as:
\begin{align}
     \alpha = \delta/E, 
\end{align}
such that one obtains
\begin{align}{\label{eq:diff-after-regularization}}
    \norm{f_{\alpha}^{\delta} -f_t}_{l^2} \leq \sqrt{\delta E} \overset{\delta \to 0}{\longrightarrow} 0,
\end{align}
which implies that 
the regularized solution converges to the true solution as the error approaches zero \cite{Tik for Fredholm equation,Tik-L-curve}. By comparing Eq.~(\ref{eq:diff-before-regularization}) and Eq.~(\ref{eq:diff-after-regularization}), it is evident that the  
Tikhonov regularization method effectively mitigates ill-posedness. Therefore, one can prove that the inverse problem is solvable.

In practice, determining the values of $\delta$ and $E$ can be challenging. However, alternative effective methods are available for selecting the regularization parameter \cite{The posterier parameter choice}. The L-curve method is a valid approach for this purpose \cite{HPC-1993}. 
The core idea involves comparing the curves of $\| K f^\delta_\alpha - g^\delta \|^2_{l^2}$ and $\|f^\delta_\alpha\|^2_{l^2}$ on a Log-Log scale to identify the regularization parameter. This method aims to balance the relative weights of the two components in the Tikhonov functional, ensuring that neither term dominates the minimization process. It is named after the characteristic L-shaped curve that appears when plotted on the scale. The L-profile is generated by the compensatory variation, where the term $\| K f^\delta_\alpha - g^\delta 
\|^2_{l^2}$ decreases monotonically as $\alpha$ decreases, while the term $\|f^\delta_\alpha\|^2_{l^2}$ exhibits an opposite trend \cite{L curve for discrete ill-posed problem}. This method selects the regularization parameter $\alpha$ by maximizing the curvature of the L-curve. Let $\rho=\log(\| K f^\delta_\alpha - g^\delta 
\|^2_{l^2})$ and $\theta=\log(\|f^\delta_\alpha\|^2_{l^2})$, with the curvature defined as function of the parameter $\alpha$:
\begin{align}
    & L(\alpha) = \frac{\rho' \theta''-\rho'' \theta'}{((\rho')^2+(\theta')^2)^{\frac{3}{2}}}, \nonumber\\
    & \alpha^*=\arg \left \{\sup _{\alpha>0} L(\alpha) \right \},
\end{align}
where $'$ denotes the partial derivative with respect to $\alpha$, and  $\alpha^*$ represents the chosen regularization parameter.

Therefore, our analysis establishes that the DFT inverse problem is ill-posed: while satisfying the existence and uniqueness criterion, it fundamentally violates stability due to tremendously diverging error amplification (\(\| \delta f \|_{l^2} \propto \kappa(K_{\text{re}}) \| \delta g \|_{l^2} \)). Crucially, we find that Tikhonov regularization restores well-posedness by introducing a regularization parameter, yielding solutions that 
converge to the true distribution as measurement errors vanish (\(\| \delta g \|_{l^2} \to 0\)).  

\section{Toy Models}\label{sec:Toy-model}

\begin{figure}
\centering
\includegraphics[width=0.95\linewidth]{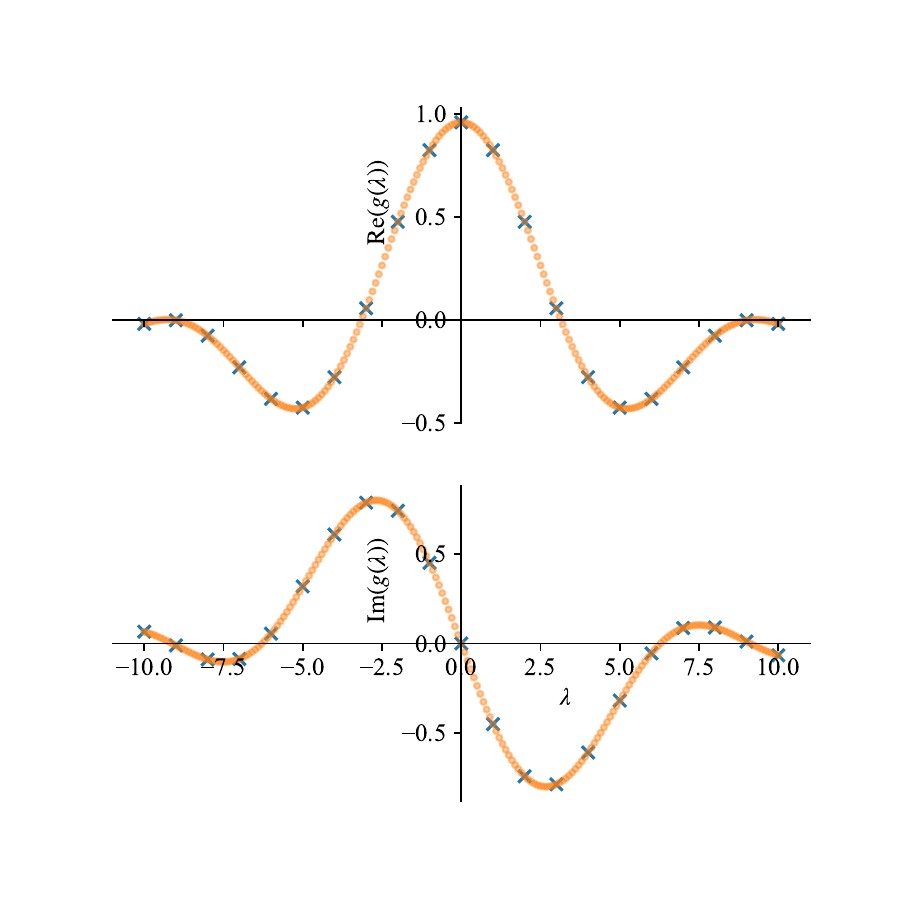}
\caption{Examples of discretization for $g(\lambda)$ used as inputs to the DFT. The figure shows cases with $\lambda_{\mathrm{max}}=10$ and $\delta\lambda=1$ or $0.1$, represented by blue crosses and orange circles respectively.}
 \label{fig:toy_gl}
\end{figure}

To systematically investigate the ill-posed nature of limited discrete Fourier inversion and validate the regularization framework, we construct a controlled toy model with a known ground truth. Adopting the asymptotic LCDA form as motivation, we define the true momentum-space solution as:
\begin{align}
    f_t(x) = 6x(1-x), \quad x\in[0,1],
\end{align}
which also represent the tree-level approximation of the quasi distribution amplitude for light mesons. The synthetic coordinate space distribution $g(\lambda)$ is generated through the DFT, shown as Fig.~\ref{fig:toy_gl}, to emulate the idealized lattice QCD measurements. The result for the inverse DFT $f_r(x)$, reconstructed through the Moore-Penrose pseudoinverse $K^{-1}$, is independent of the discretization parameters. As illustrated in Fig.~\ref{fig:DFT_params}, we systematically compare the effects of varying the parameters $\delta\lambda$, $\lambda_{\text{max}}$, and $\delta x$ on the reconstructed $f_r(x)$, employing a controlled variable approach. 
The comparison demonstrates that all parameter selections produce numerically indistinguishable solutions $f_r(x)$, which accurately reproduce the target distribution $f_t(x)$.
This suggests that discretization and truncation artifacts do not inherently lead to ill-posedness in statistical-noise-free systems. The observed stability stems from the analyticity of $g(\lambda)$, which satisfies the spectral consistency conditions of the Sampling theorem \cite{Sampling theorem}. In other words, the inversion for the limited Fourier transform satisfies the existence and uniqueness conditions of the criteria for well-posedness.

\begin{figure}
\centering
\includegraphics[width=0.75\linewidth]{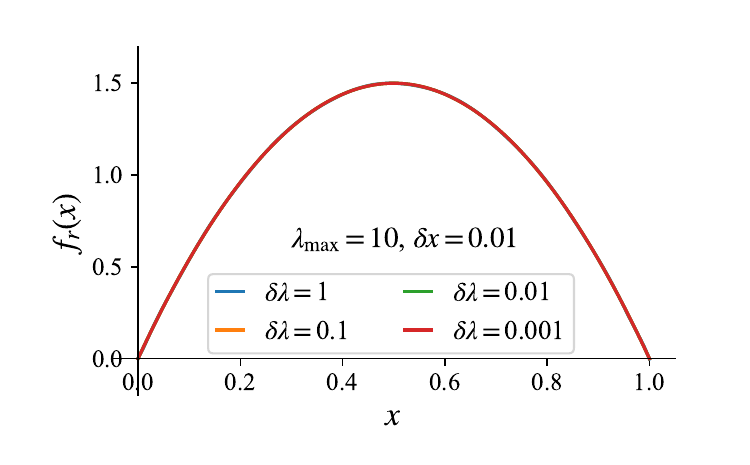}
\includegraphics[width=0.75\linewidth]{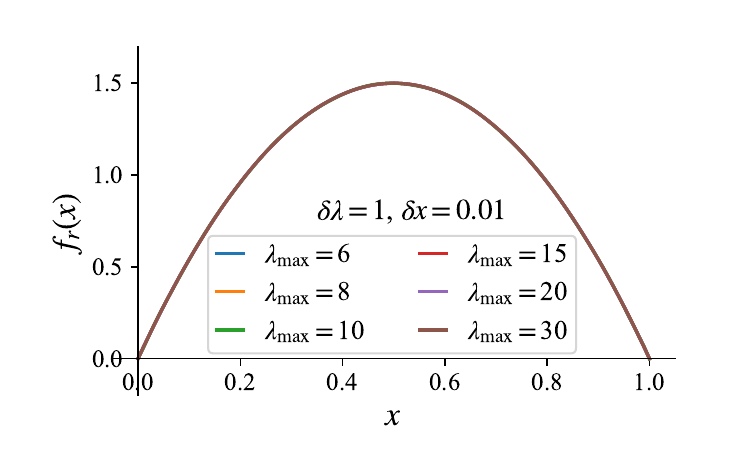}
\includegraphics[width=0.75\linewidth]{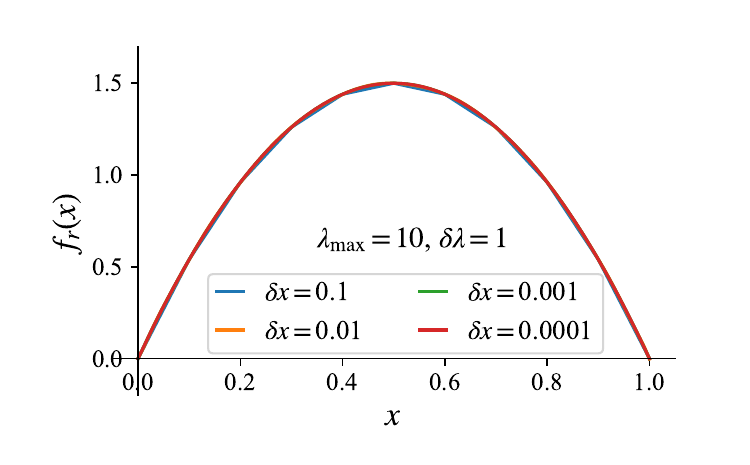}
\caption{Dependence of the resulting solutions $f_r(x)$ on $\delta\lambda$ (upper panel), $\lambda_{\mathrm{max}}$ (center panel) and $\delta x$ (lower panel). When the discretized input $g(\lambda)$ is free of errors, the results are basically independent of these parameters and closely replicate the true solution $f_t(x)$.}
 \label{fig:DFT_params}
\end{figure}

However, realistic physical inputs inevitably contain uncertainties that are significantly amplified through the discrete Fourier inversion process. This ill-posedness of DFT, mathematically characterized by significantly divergent condition numbers (e.g., $\kappa\sim10^{17}$ in the aforementioned case), implies that the unregularized solutions $f_r(x)$ is dominated by noise artifacts rather than physical signals under error-contaminated conditions. Typically, lattice QCD data contain two distinct uncertainty types: statistical uncertainties originating from finite Monte Carlo sampling (manifesting as uncorrelated noise), and systematic uncertainties arising from correlated analysis procedures such as fitting, extrapolation, and so on.

We first model the impact of uncorrelated uncertainties through controlled fluctuation of synthetic lattice data. The discretized $g^{\delta}(\lambda)$ with uncorrelated errors are constructed as:
\begin{align}
 g^\delta(\lambda_j)& = g(\lambda_j) + N(0,\sigma^2)e^{\delta M |\lambda_j|}, \nonumber\\
 &\mathrm{with~~}\lambda_j \in {-\lambda_{\text{max}}, -\lambda_{\text{max}}+\delta\lambda, \ldots, \lambda_{\text{max}}}
\end{align}
where the ground truth $g(\lambda_j)$  is generated by DFT, see Eq.~(\ref{eq:discrete_Fourier_Transform}) of the momentum-space distribution
\begin{align}
      f_t(x) = \left\{ 
           \begin{aligned}
                 6x(1-x), \quad & x\in[0,1]\\
                 0, \quad & \text{otherwise}
           \end{aligned}        
      \right.,
\end{align}
and $N(0, \sigma^2)$ represents Gaussian-distributed statistical noise with $\mu=0$ and $\sigma=0.05$. Besides, an exponentially increasing factor $e^{\delta M |\lambda_j|}$ with $\delta M=0.02$ is introduced to characterizes the signal-to-noise degradation at large Euclidean separations, analogous to hadronic matrix element measurements in lattice QCD. We generate $N_{\text{samp}} =200$ bootstrap samples of synthetic lattice data, as shown in Fig.~\ref{fig:uncorrelated-error}, to emulate the finite Monte Carlo sampling. The parameter $\delta M$ connects to hadronic spectral properties through $\delta M\sim m_H-nm_{\pi}$, where $n=1$ (for mesons) or $n=3/2$ (for baryons) governs the energy difference between hadronic states and multi-pion thresholds \cite{Wagman:2017gqi}.

\begin{figure}
\centering
\includegraphics[width=0.92\linewidth]{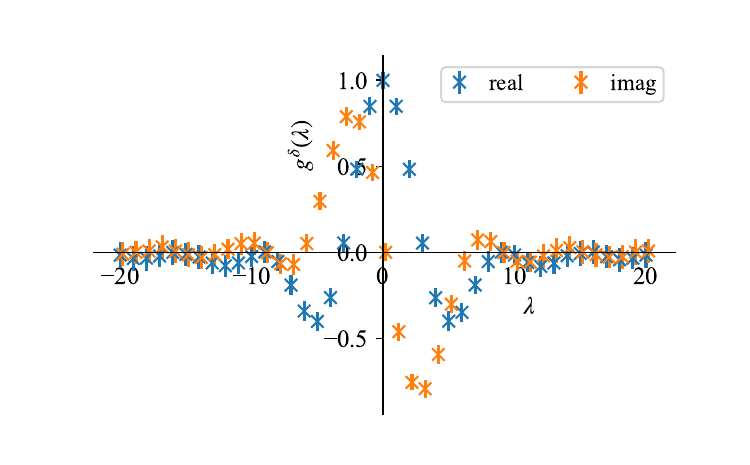}
\includegraphics[width=0.92\linewidth]{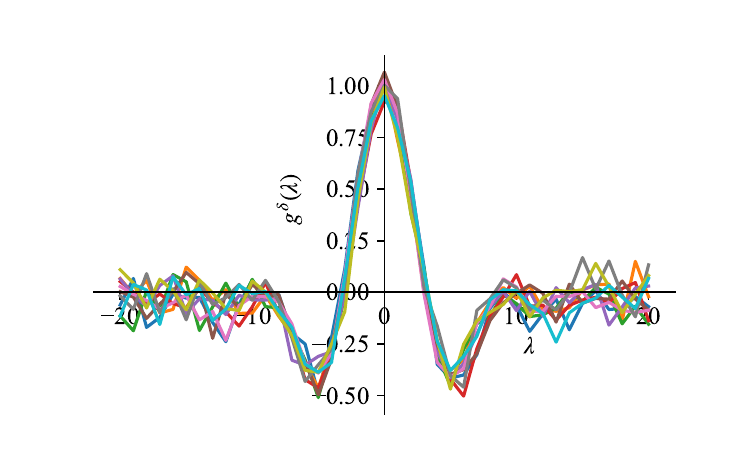}
\caption{Upper panel: A set of synthetic lattice data as bootstrap samples ($N_{\mathrm{samp}}=200$) with uncorrelated noise ($\mu = 0, \sigma = 0.05$, $ \delta M= 0.02$). Blue and orange curves represent the real and imaginary components of $g^{\delta}(\lambda)$ respectively. Lower panel: The first 10 samples exhibit that the distribution behavior under uncorrelated errors.}
\label{fig:uncorrelated-error}
\end{figure}

Unregularized DFT inversion produces divergent solutions, as illustrated by the upper panel of Fig.~\ref{fig:result-of-uncorrelated-error}, where the reconstructed distributions $f_r(x)$ exhibit significant deviations from the true solution $f_t(x)$. These unbounded oscillations originate from bootstrap ensembles ($\Delta f_r/\Delta g \sim \mathcal{O}(\kappa)$), where tiny perturbations $\delta g \sim \mathcal{N}(0,\sigma^2)$ drive uncontrolled variance amplification ($\|f_r\|_{l^2} \to \infty$). This phenomenon provides a validation of the inverse problem's violation of Hadamard's third criterion for well-posedness.

\begin{figure}
\centering
\includegraphics[width=0.92\linewidth]{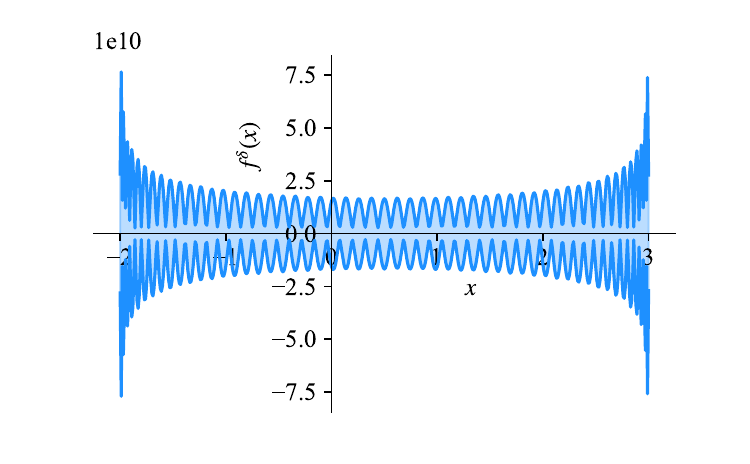}
\includegraphics[width=0.92\linewidth]{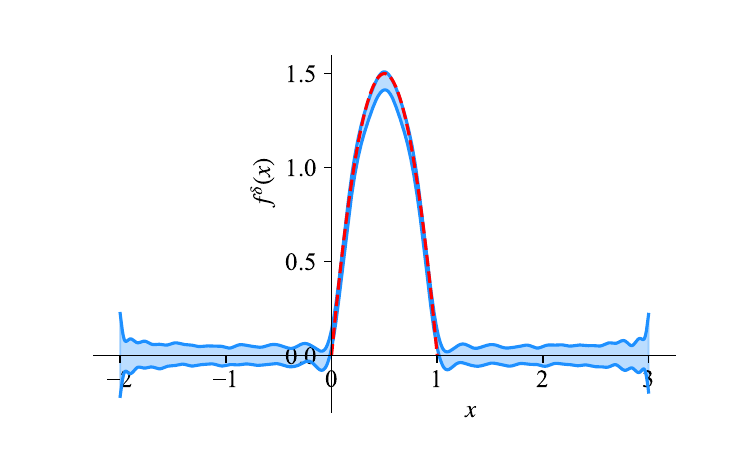}
\caption{Comparison with results from traditional SVD solution (upper panel) and Tikhonov regularization (lower panel). The red dashed lines represent the true solution $f_t(x)$.}
\label{fig:result-of-uncorrelated-error}
\end{figure}

Tikhonov regularization restores numerical stability through filtering of error-amplified modes while preserving the asymptotic $6x(1-x)$ profile, shown as the lower panel of Fig.~\ref{fig:result-of-uncorrelated-error}. The L-curve criterion optimally balances regularized solution norms $\| f^\delta_\alpha \|_{l^2}^2$ against residual norms $\|Kf_\alpha^\delta - g^\delta\|_{l^2}^2$, strongly reduce the reconstruction errors. This method effectively mitigates statistical noise propagation, demonstrating robust error control essential for extracting physical signals from noisy data. The detailed implementation and parameter selection strategies will be systematically addressed in Sec.~\ref{sec:Pion-quasi-DA}.  

We next investigate the case with correlated systematic errors described by the modified noise formula
\begin{align}
    g^\delta(\lambda_j) = g(\lambda_j) + \eta e^{\delta M |\lambda_j|}, \quad \eta \sim N(0,\sigma^2)
\end{align}
where the noise term $\eta$ maintains strict spatial coherence across all $\lambda_j$ (constant on each bootstrap sample), simulating systematic fluctuations from correlated analysis procedures. Adopting the identical parameterization ($\mu=0$, $\sigma=0.05$, $\delta M=0.02$) as for the uncorrelated scenario, we synthesize $N_{\mathrm{samp}}=200$ synthetic datasets, whose characteristic scattering is visualized in Fig.~\ref{fig:correlated-error}.

\begin{figure}
\centering
\includegraphics[width=0.92\linewidth]{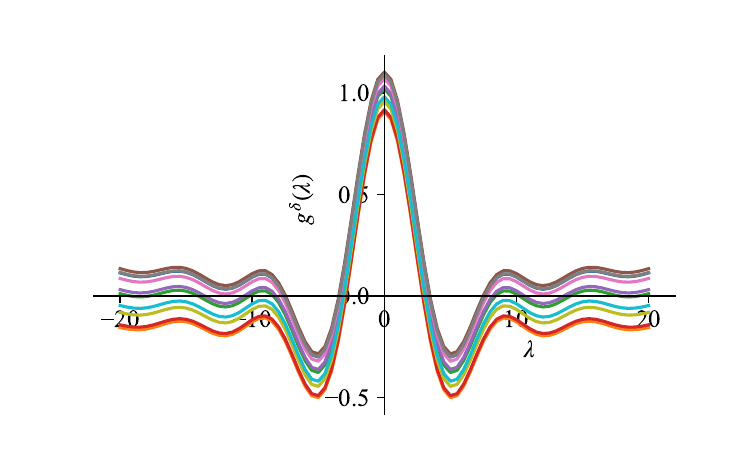}
\caption{The function $g^{\delta}$ for the first 10 samples with fully correlated errors, using the same parameter settings as in Fig.~\ref{fig:uncorrelated-error}.}
 \label{fig:correlated-error}
\end{figure}

The traditional SVD reconstruction with correlated errors, as shown in the upper panel of Fig.~\ref{fig:result-of-correlated-error}, exhibits significant suppression of high-frequency oscillations compared to the uncorrelated case, while it retains visible low-frequency distortions ($\Delta f_r/\Delta g \sim \mathcal{O}(10^3)$). This convergence behavior arises from error coherence, however, it maskes the true solution in the distorted low-frequency components. 

The Tikhonov regularization scheme suppresses the discrepancies between high- and low-frequency error components by introducing the regularization parameter $\alpha$. 
The L-curve criterion optimally selects $\alpha$ to minimize the expected reconstruction error $\mathbb{E}\left[\|f_\alpha^\delta - f_t\|_{l^2}\right]$ and achievs enhanced solution accuracy. However, correlated phases originating from the bootstrap ensemble covariance structure $\text{Cov}(\eta(\lambda_j), \eta(\lambda_k)) = \sigma^2\delta_{jk}$ induce persistent second-order oscillations that resist suppression through single-parameter regularization.

\begin{figure}
\centering
\includegraphics[width=0.92\linewidth]{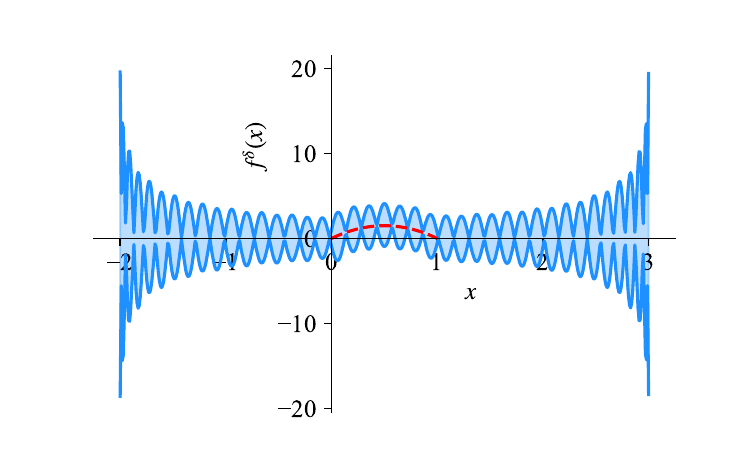}
\includegraphics[width=0.92\linewidth]{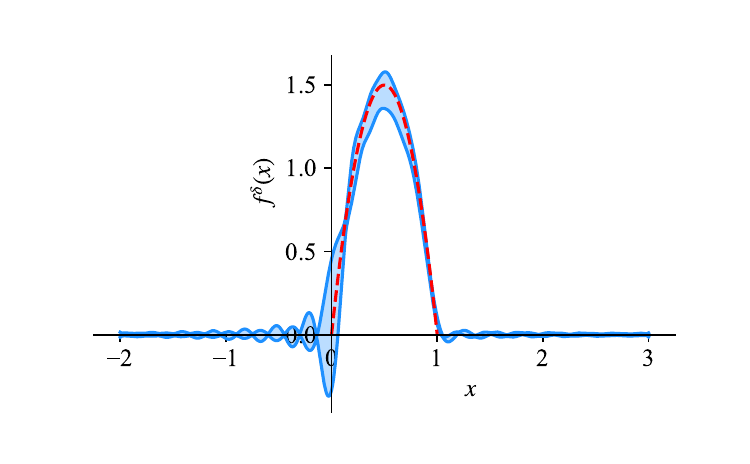}
\caption{Comparison with results from traditional SVD solution (upper panel) and Tikhonov regularization (lower panel). The red dashed lines represent the true solution $f_t(x)$.}
\label{fig:result-of-correlated-error}
\end{figure}

\section{Application to Pion Quasi Distribution Amplitude}\label{sec:Pion-quasi-DA}

\begin{figure}
\centering
\includegraphics[width=0.92\linewidth]{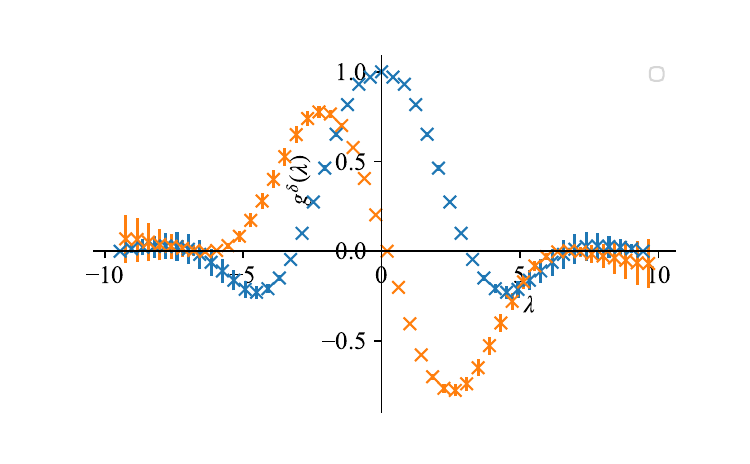}
\caption{Numerical results for the pion quasi DA from the lattice QCD calculation \cite{LatticeParton:2022zqc}. Blue and orange data points represent the real and imaginary parts respectively.}
 \label{fig:real-input-pion-quasi-DA}
\end{figure}

The toy models in Sec.~\ref{sec:Toy-model} quantitatively demonstrate the efficacy of Tikhonov regularization in mitigating both uncorrelated and correlated uncertainties within discrete Fourier inversion. Based on these tests, we implement the regularization scheme to analyze the pion quasi DA from lattice QCD, specifically utilizing the bootstrap ensembles of Euclidean correlators $g(\lambda)$ generated by the Lattice Parton Collaboration (LPC) \cite{LatticeParton:2022zqc}, as illustrated in Fig.~\ref{fig:real-input-pion-quasi-DA}.

\begin{figure}
\centering
\includegraphics[width=0.92\linewidth]{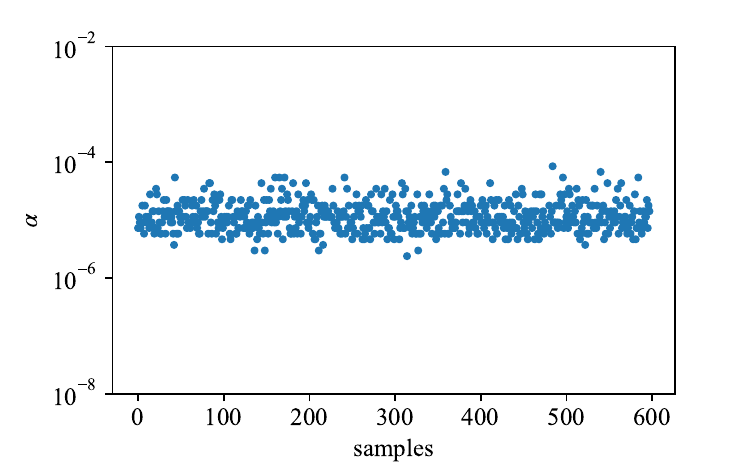}
\includegraphics[width=0.92\linewidth]{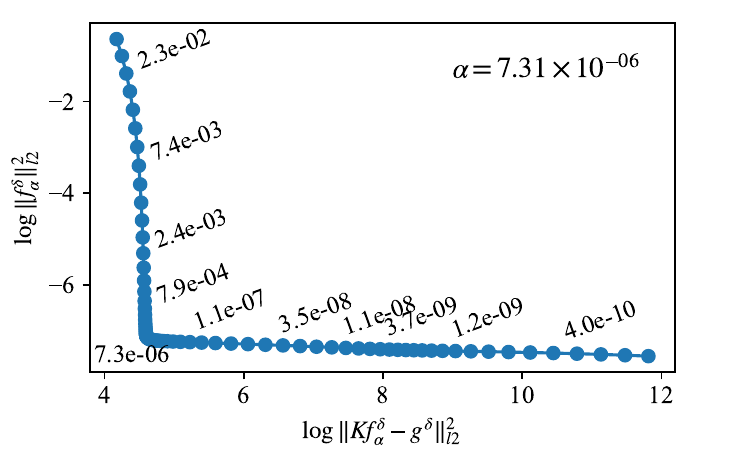}
\caption{Selection of regularization parameter $\alpha$ using the L-curve method. Upper panel: Determination of $\alpha$ on each bootstrap sample by the L-curve criterion. Lower panel: The characteristic L-curve, which represents the balance between solution norm and residual norm for one sample. The curvature maximum identifies the optimized regularization parameter $\alpha_{\text{opt}}$.}
  \label{fig:alpha-selection-for-pion-quasi-DA}
\end{figure}

\begin{figure*}
\centering
\includegraphics[width=1\linewidth]{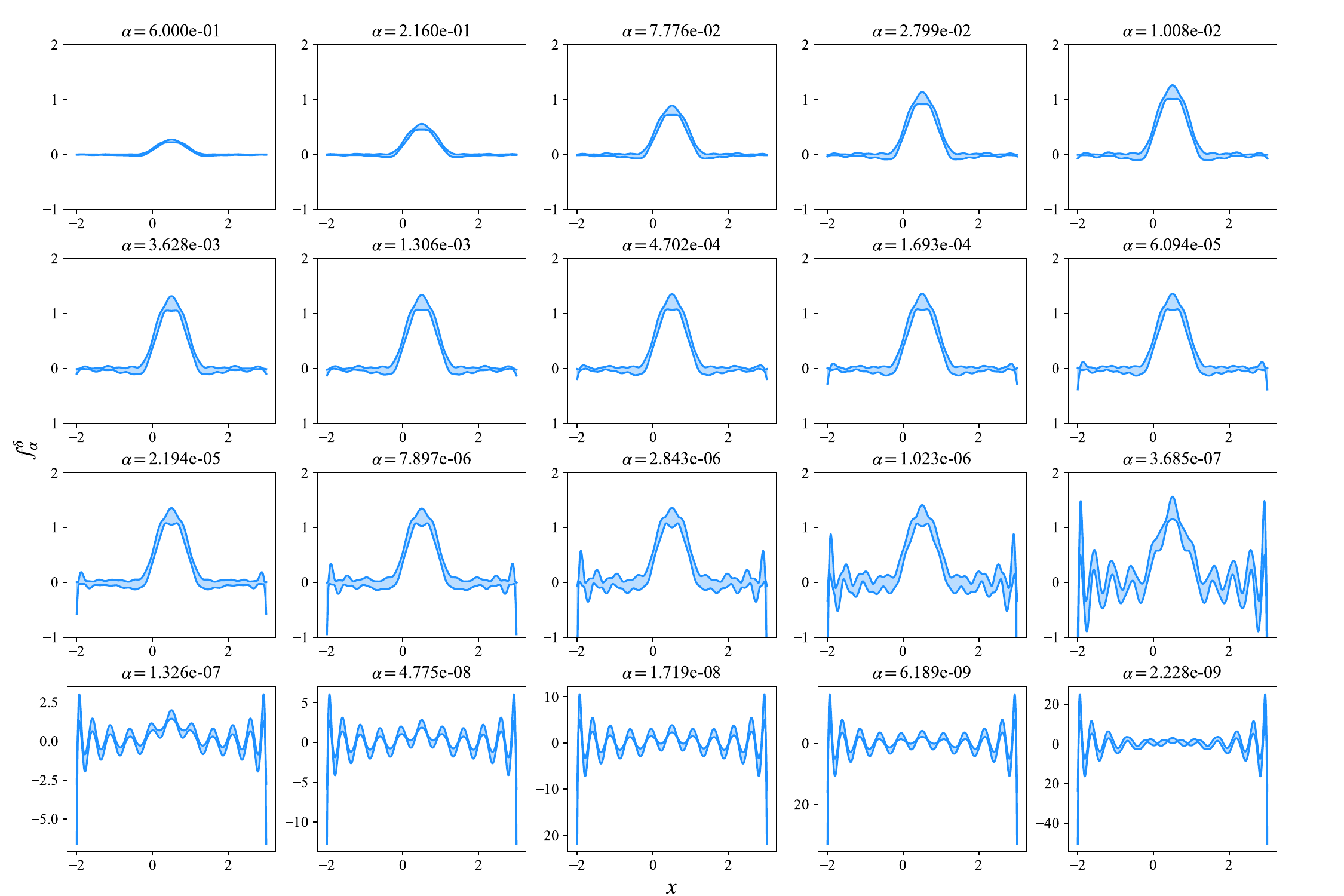}
\caption{Dependence of regularized solutions on the regularization parameters $\alpha$. The choice of $\alpha = \{0.6^{2n+1} \, | \, n = 0, 1, \ldots, 19\}$.}
  \label{fig:Tikh_alpha_evolution}
\end{figure*}

The inverse problem is solved through the following constrained optimization:
\begin{align}
    f_\alpha^\delta(x) = \arg \min_{f} \left( \|Kf - g^\delta\|_{l^2}^2 + \alpha \|f\|_{l^2}^2 \right),
\end{align}
where the regularization parameter $\alpha$ is determined through the L-curve criterion (as illustrated in Fig.~\ref{fig:alpha-selection-for-pion-quasi-DA}). The optimal value $\alpha_{\text{opt}} = 7.31 \times 10^{-6}$ corresponds to the curvature maximum in the $\log\|Kf_\alpha^\delta - g^\delta\|^2_{l^2}$ vs. $\log\|f_\alpha^\delta\|^2_{l^2}$ plane. Notably, $\alpha$ exhibits fair stability across three orders of magnitude ($10^{-6} \leq \alpha \leq 10^{-4}$), validating the numerical reliability of both the regularization framework and the L-curve selection method.

\begin{figure}
\centering
\includegraphics[width=0.92\linewidth]{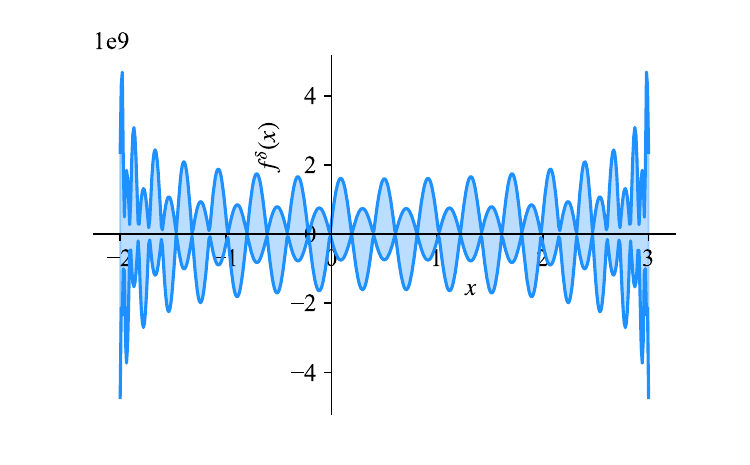}
\includegraphics[width=0.92\linewidth]{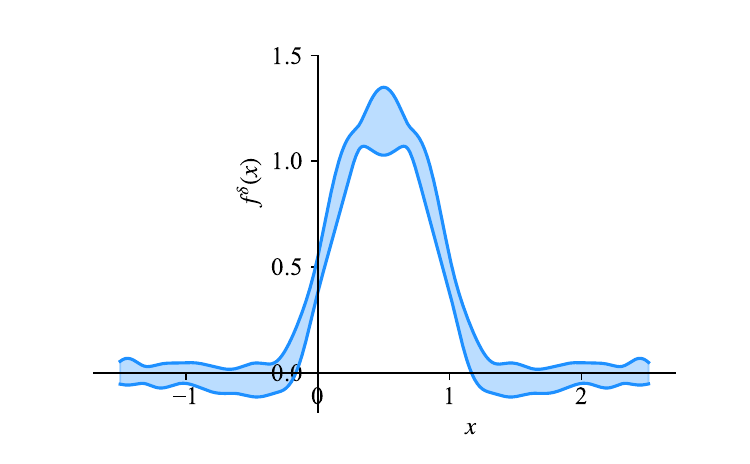}
\caption{Comparison with results of pion quasi DA from traditional SVD solution (upper panel) and Tikhonov regularization (lower panel).}
    \label{fig:results-of-pion-quasi-DA}
\end{figure}

The lower panel of Fig.~\ref{fig:alpha-selection-for-pion-quasi-DA} illustrates the L-curve, which quantifies the critical balance between the residual norm $ \log \|Kf^\delta_\alpha - g^\delta\|_{l^2}^2$ ($x$-axis) and the regularized solution norm $\log \|f^\delta_\alpha\|_{l^2}^2$ ($y$-axis). This optimized curve emerges from the competition between data fidelity (requiring close approximation to input measurements) and solution stability. The regularization parameter $\alpha$ governs this balance with three different regions, as shown in Fig.~\ref{fig:Tikh_alpha_evolution}: Oversmoothing occurs when $\alpha>\mathcal{O}(10^{-4})$, where excessive regularization leads to a systematic deviation from the true solution. Conversely, underregulation, i.e.  $\alpha<\mathcal{O}(10^{-6})$, permits uncontrolled noise amplification, manifested as high-frequency oscillations. The optimal regime $\mathcal{O}(10^{-6})<\alpha<\mathcal{O}(10^{-4})$ achieves balanced error control, which is in quantitative agreement with the L-curve optimized values listed in the upper panel of Fig.~\ref{fig:alpha-selection-for-pion-quasi-DA}.

Fig.~\ref{fig:results-of-pion-quasi-DA} compares the results of the traditional SVD solution and Tikhonov regularization. The SVD approach exhibits severe ill-posedness, manifested as unphysical oscillations with amplification factors $\Delta f_r/\Delta g \sim \mathcal{O}(10^{10})$. In  contrast, the Tikhonov regularized solutions suppress these instabilities.

\begin{figure}
\centering
\includegraphics[width=0.92\linewidth]{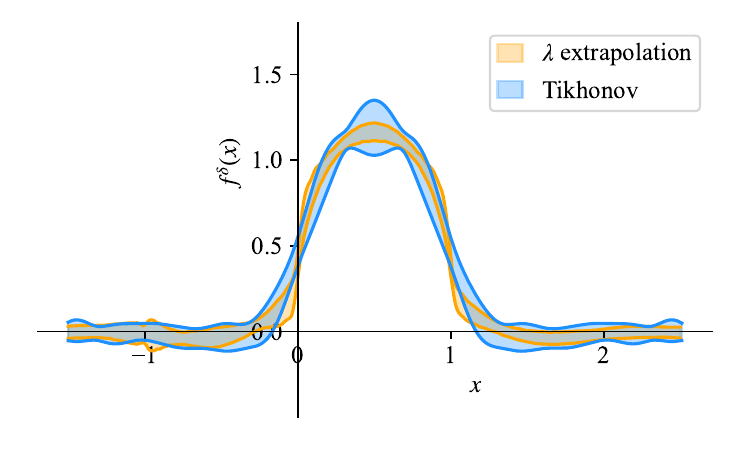}
\caption{Pion quasi DA from $\lambda$-extrapolation (orange band) and Tikhonov regularization (blue band). }
\label{fig:Comparison-with-results-from-Lamda-extra-and-Tikhonov}
\end{figure}

We also compare our results with those obtained from the $\lambda$-extrapolation method. As shown in Fig.\ref{fig:Comparison-with-results-from-Lamda-extra-and-Tikhonov}, both approaches achieve statistically consistent reconstructions within uncertainties. The $\lambda$-extrapolation method explicitly incorporates physical constraints, such as polynomial boundedness and endpoint suppression, effectively suppressing the ill-posedness when adequately precise data exist in the large $\lambda$ region.
However, in scenarios with a short limited $\lambda$ region with precise data, the Tikhonov approach provides an important complementary strategy. 
Notably, both methods converge to identical DA profiles when applied to precision data, confirming the inverse problem's well-posedness under optimal conditions. This work has demonstrated that Tikhonov regularization serving as a primary reconstruction tool for moderate-precision data sets and as a systematic cross-check for the physics-driven methods.


\section{Summary and outlook}\label{sec:Summary-outlook}
In this work, we establish a rigorous mathematical framework for addressing ill-posed inverse problems in lattice QCD calculations of partonic distributions of hadrons, focusing on the inverse limited Fourier transform as a representative case. This problem satisfies the existence and uniqueness condition and we systematically investigate the instability introduced by truncation. We demonstrated that this instability can be effectively addressed by Tikhonov regularization without relying on ansatz-based assumptions.
Through theoretical discussions and numerical validations for toy models with both uncorrelated and correlated noise, we confirm that Tikhonov regularization together with L-curve parameter selection provides a robust mechanism for stabilizing solutions while preserving essential physical features. The successful application to pion quasi DA lattice data further validates the method, with results consistent with the physical driven $\lambda$-extrapolation approach.  

Our comparative analysis reveals distinct operational domains for different inversion strategies, validating that both physics-driven and mathematics-driven approaches can achieve reliable results for sufficient data precision. 
{However, both methods come with their respective advantages and limitations. 
For the physically physics-driven $\lambda$-extrapolation approach, it requires reliable signal quality in the large $\lambda$ region ($\lambda > 10$). When applicable, this method can completely circumvent the inverse problem and provide reliable uncertainty estimates, as pointed out in Ref.~\cite{Chen:2025cxr}. However, it is well known that the uncertainties in lattice data do not grow linearly, and performing extrapolation in regions with insufficiently large $\lambda$ may introduce additional systematic errors. 
On the other hand, the Tikhonov regularization method can still perform well if the signal sufficiently accurate in the smaller $\lambda$ region $\lambda < 10$. Nonetheless, the introduction of a regularization parameter may lead to potential systematic uncertainties, which must be carefully handled.}

{For the next stage, in addition to improving the accuracy of lattice calculation, two frontier challenges are prominent. First, the combination of physics-driven priors (e.g., endpoint behavior constraints from operator product expansion) with inverse method such as Tikhonov regularization may further balance the stability and accuracy of solution.}
Advances in inverse problem theory, particularly in uncertainty quantification for nonlocal operators, could further bridge the gap between lattice calculated correlators and partonic distributions of hadrons. These developments, coupled with anticipated improvements in lattice data precision, position inverse problem techniques as increasingly central to precision QCD studies in the exascale computing era.

\section{Acknowledgments}

We thank Andreas Sch\"afer and Xiangdong Ji for valuable discussions and constructive suggestions on this work.
We also thank Xiong-Bin Yan for valuable  discussions in mathematics.   We thank the CLQCD collaborations for providing us the lattice data of pion distribution amplitudes \cite{LatticeParton:2022zqc}. This work is supported in part by Natural Science Foundation of China under grant No. 12375069 and 12335003.  Q.A.Z is supported by the Fundamental Research Funds for the Central Universities. J.H is supported by Guang-dong Major Project of Basic and Applied Basic Research No. 2020B0301030008. A.S.X and F.S.Y are supported by the Fundamental Research Funds for the Central Universities under No.~lzujbky-2023-stlt01, lzujbky-2024-oy02 and lzujbky-2025-eyt01.

\appendix 
\begin{widetext}
\section{The Proof of the uniqueness}\label{sec:Proof of the uniqueness}
\begin{theorem}[The Uniqueness of the Limited Fourier Transform] 
    Suppose that $f_1(x),f_2(x) \in L^2(x_{\mathrm{min}},x_{\mathrm{max}})$, with $-\infty<x_{\mathrm{min}}<x_{\mathrm{max}}<\infty$. If $\int_{x_{\mathrm{min}}}^{x_{\mathrm{max}}} e^{ix\lambda} f_1(x) dx=\int_{x_{\mathrm{min}}}^{x_{\mathrm{max}}} e^{ix\lambda} f_2(x) dx=g(\lambda),\lambda\in [\lambda_{\mathrm{min}},\lambda_{\mathrm{max}}]$ (or a sequence of points converging to this interval), with $-\infty<\lambda_{\mathrm{min}}<\lambda_{\mathrm {max}}<\infty$, then we have $f_{1}(x)=f_{2}(x)$, a. e. $x\in [x_{\mathrm{min}},x_{\mathrm{max}}]$.
\label{thm:The Proof of the uniqueness}
\end{theorem}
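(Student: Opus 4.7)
The plan is to reduce the theorem to a statement about the zero set of an entire function and then invoke the identity theorem for analytic functions, followed by Plancherel's theorem on $\mathbb{R}$. Specifically, set $h(x) = f_1(x) - f_2(x)$, which lies in $L^2(x_{\min},x_{\max})$, and extend it by zero to all of $\mathbb{R}$. The hypothesis becomes
\begin{equation}
G(\lambda) \;\equiv\; \int_{x_{\min}}^{x_{\max}} e^{ix\lambda}\, h(x)\, dx \;=\; 0
\end{equation}
for every $\lambda$ in the given set $S$, where $S$ is either the interval $[\lambda_{\min},\lambda_{\max}]$ or a sequence converging to a point of it. The goal is to conclude $h=0$ a.e.

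First, I would show that $G(\lambda)$ extends to an entire function of the complex variable $\lambda \in \mathbb{C}$. Because $h$ has compact support in $[x_{\min}, x_{\max}]$ and lies in $L^2$ (hence in $L^1$ on this finite interval by Cauchy--Schwarz), one can differentiate under the integral sign and check via Morera's theorem (or a direct power series expansion of $e^{ix\lambda}$) that $G$ is holomorphic on all of $\mathbb{C}$. In fact, $|G(\lambda)| \le \|h\|_{L^1}\,\exp\bigl(\max(|x_{\min}|,|x_{\max}|)\,|\mathrm{Im}\,\lambda|\bigr)$, so $G$ is an entire function of exponential type, which is precisely the Paley--Wiener class already invoked earlier in the text.

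Next, I would exploit the hypothesis that $G$ vanishes on $S$. In either formulation, $S$ has an accumulation point in $\mathbb{C}$, so by the identity theorem for holomorphic functions, $G \equiv 0$ on $\mathbb{C}$, and in particular $G(\lambda) = 0$ for every $\lambda \in \mathbb{R}$. This means that the standard Fourier transform $\hat{h}$ of the (zero-extended) $L^2$ function $h$ vanishes identically on $\mathbb{R}$. By the Plancherel theorem, $\|h\|_{L^2(\mathbb{R})} = (2\pi)^{-1/2}\|\hat{h}\|_{L^2(\mathbb{R})} = 0$, so $h = 0$ a.e., and therefore $f_1 = f_2$ a.e. on $[x_{\min}, x_{\max}]$.

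The only real subtlety, and the step I would be most careful about, is the analytic extension: one must verify that the finiteness of $h$ on a bounded interval together with the $L^1$ control coming from the $L^2$ hypothesis is genuinely enough to differentiate past the integral sign for complex $\lambda$, so that $G$ is truly entire rather than merely continuous. Once this is established, the identity theorem and Plancherel step are essentially automatic, and the two versions of the hypothesis (an interval of $\lambda$'s versus a convergent sequence) are handled uniformly, since both furnish the accumulation point needed for the identity theorem.
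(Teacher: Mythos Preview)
Your proposal is correct, and up through the analytic-continuation and identity-theorem step it coincides with the paper's argument: both set $h=f_1-f_2$, check that $\phi(z)=\int_{x_{\min}}^{x_{\max}} e^{ixz}h(x)\,dx$ is entire (the paper does this by expanding $e^{ixz}$ as a power series and bounding the moments to obtain infinite radius of convergence), and then use that the zero set has an accumulation point to conclude $\phi\equiv 0$ on $\mathbb{C}$. The divergence is in the endgame. You pass from $\phi\equiv 0$ on $\mathbb{R}$ directly to $\widehat{h}\equiv 0$ and invoke Plancherel to get $\|h\|_{L^2}=0$. The paper instead reads off from the vanishing power series that every moment $\int x^n h(x)\,dx=0$, then uses the Weierstrass approximation theorem together with density of $C[x_{\min},x_{\max}]$ in $L^2$ to find polynomials $Q_n$ with $\|h-Q_n\|_{L^2}<\varepsilon$, and bounds $\|h\|_{L^2}^2=\int h(h-Q_n)\le \|h\|_{L^2}\|h-Q_n\|_{L^2}$. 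Your route is shorter and more in the spirit of the Paley--Wiener framework already cited in the text; the paper's route is more self-contained in that it avoids appealing to Fourier analysis on all of $\mathbb{R}$ and stays entirely within polynomial approximation on the compact interval. Either way the ``subtlety'' you flag is handled identically: the $L^2$-on-a-bounded-interval hypothesis gives $h\in L^1$ by Cauchy--Schwarz, and that is exactly what both arguments use to justify the power-series manipulation or differentiation under the integral.
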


\begin{proof} 
Since the integral equation is linear, we know that 
\begin{align}
    \int_{x_{\mathrm{min}}}^{x_{\mathrm{max}}} e^{ix\lambda} \left(f_1(x)-f_2(x)\right) dx=0.
\end{align} 
Setting $f(x)=f_1(x)-f_2(x)$, we just need to prove that 
\begin{align}\label{eqq}
   \int_{x_{\mathrm{min}}}^{x_{\mathrm{max}}} e^{ix\lambda} f(x) dx=0, ~\lambda\in [\lambda_{\mathrm{min}},\lambda_{\mathrm{max}}] 
\end{align}
implies $f(x)=0$, a. e. $x\in [x_{\mathrm{min}},x_{\mathrm{max}}]$.

First, we prove the analyticity of the complex function 
\begin{align}
    \phi(z)=\int_{x_{\mathrm{min}}}^{x_{\mathrm{max}}} e^{ix z} f(x) dx,z\in \mathbb{C}.
\end{align}

It is known that $e^{ixz}$ is analytic for $z\in \mathbb C$ to $x\in \mathbb R$, and $e^{ixz}=\sum_{n=0}^\infty \frac{i^n}{n!}z^n x^n$. Then we have 
\begin{align}\label{ser}
     \phi(z)= \int_{x_{\mathrm{min}}}^{x_{\mathrm{max}}} e^{ixz}  f(x) dx=\sum_{n=0}^\infty \frac{i^n}{n!}z^n \int_{x_{\mathrm{min}}}^{x_{\mathrm{max}}} x^n f(x) dx, 
\end{align}
For $f\in L^1(x_{\mathrm{min}},x_{\mathrm{max}})$, we have 
\begin{align}
    \left|\int_{x_{\mathrm{min}}}^{x_{\mathrm{max}}} x^n f(x) dx\right| \le \left(\max\{|x_\mathrm{min}|,|x_\mathrm{max}|\}\right)^n \int_{x_{\mathrm{min}}}^{x_{\mathrm{max}}}|f(x)|dx.
\end{align}
The coefficients in the series $\phi(z)$ satisfy
\begin{align}
    \lim_{n\rightarrow \infty} \left(\frac{1}{n!}\right)^{1/n} \left|\int_{x_{\mathrm{min}}}^{x_{\mathrm{max}}} x^n f(x) dx\right|^{1/n}\le 
    \lim_{n\rightarrow \infty} \left(\frac{1}{n!}\right)^{1/n} \left(\int_{x_{\mathrm{min}}}^{x_{\mathrm{max}}}|f(x)|dx\right)^{1/n}(\max\{|x_\mathrm{min}|,|x_\mathrm{max}|\})
\rightarrow 0,
\end{align}
by $\lim_{n\rightarrow\infty}  ( n!)^{1/n}=\infty$. Therefore 
the convergence radius of the series $\phi(z)$ is $\infty$, that means 
the complex function $\phi(z)$ is analytic in the complex plane. By the property of analytic function, from Eq.~(\ref{eqq}), we have 
 $\phi(z)=0, ~z\in \mathbb{C}$, namely
    \begin{align}
\int_{x_{\mathrm{min}}}^{x_{\mathrm{max}}}f(x) dx + i z \int_{x_{\mathrm{min}}}^{x_{\mathrm{max}}} x f(x) dx + \frac{z^2}{2} i^2 \int_{x_{\mathrm{min}}}^{x_{\mathrm{max}}} x^2 f(x) dx  + \cdots + \frac{z^n}{n!} i^n\int_{x_{\mathrm{min}}}^{x_{\mathrm{max}}} x^n f(x) dx + \cdots =0, ~z\in \mathbb{C}. \label{eq:Taylor and unique I}
    \end{align}
    
From Eq.~(\ref{eq:Taylor and unique I}) we obtain for $z=0$
\begin{align}
\int_{x_{\mathrm{min}}}^{x_{\mathrm{max}}}f(x) dx=0    
\end{align}
and 
\begin{align}\label{eq:Taylor and unique II}
        & iz \int_{x_{\mathrm{min}}}^{x_{\mathrm{max}}} x f(x) dx + i^2\frac{z^2}{2} \int_{x_{\mathrm{min}}}^{x_{\mathrm{max}}}  x^2 f(x) dx  + \cdots + i^n \frac{z^n}{n!} \int_{x_{\mathrm{min}}}^{x_{\mathrm{max}}} x^n f(x) dx + \cdots =0, ~z\in \mathbb{C}.
\end{align}

Dividing both sides of Eq.~(\ref{eq:Taylor and unique II}) by $iz$, and then set $z=0$, one obtain 
\begin{align}
    \int_{x_{\mathrm{min}}}^{x_{\mathrm{max}}}xf(x) dx=0.
\end{align} 
Repeating above process, one can obtain that
\begin{align}\label{The moment problem}
    \int_{x_{\mathrm{min}}}^{x_{\mathrm{max}}}x^n f(x) dx=0, ~~n=0,1,2,\cdots
\end{align}

Since $C[x_{\mathrm{min}},x_{\mathrm{max}}]$ is dense in $L^{2}(x_{\mathrm{min}},x_{\mathrm{max}})$, then for $f(x)\in L^{2}(x_{\mathrm{min}},x_{\mathrm{max}})$ and any $\epsilon>0$, there exists $\tilde{f}(x) \in C[x_{\mathrm{min}},x_{\mathrm{max}}]$, such that 
\begin{align}
||f-\tilde{f}||_{L^{2}(x_{\mathrm{min}},x_{\mathrm{max}})}<\epsilon.    
\end{align} 
On the other hand, for $\tilde{f}(x) \in C[x_{\mathrm{min}},x_{\mathrm{max}}]$, there exists a polynomial $Q_{n}(x)$ of degree $n\in \mathbb{N}$, such that
\begin{align}
||\tilde{f}-Q_{n}||_{C[x_{\mathrm{min}},x_{\mathrm{max}}]}<\epsilon,
\end{align}
by the Weierstrass theorem. Therefore, one have
\begin{align}\label{Weierstrass}
\left\|f-Q_n\right\|_{L^2(x_{\mathrm{min}},x_{\mathrm{max}})} & \leq\|f-\tilde{f}\|_{L^2(x_{\mathrm{min}},x_{\mathrm{max}})}+\left\|\tilde{f}-Q_n\right\|_{L^2(x_{\mathrm{min}},x_{\mathrm{max}})}
\nonumber\\ 
& \leq \epsilon+\sqrt{x_{\mathrm{max}}-x_{\mathrm{min}}}\left\|\tilde{f}-Q_n\right\|_{C[x_{\mathrm{min}},x_{\mathrm{max}}]} \nonumber\\
& <\epsilon+\epsilon \sqrt{x_{\mathrm{max}}-x_{\mathrm{min}}}.
\end{align}

By using Eq.~(\ref{The moment problem}), we know that 
\begin{align}
\int_{x_{\mathrm{min}}}^{x_{\mathrm{max}}}f(x)Q_{n}(x)dx=0.    
\end{align} 
Combined with the Cauchy inequality, we have
\begin{align}
\|f\|_{L^2(x_{\mathrm{min}},x_{\mathrm{max}})}^2 & =\int_{x_{\mathrm{min}}}^{x_{\mathrm{max}}} f^2(x) d x=\int_{x_{\mathrm{min}}}^{x_{\mathrm{max}}} \left(f^2(x)-f(x) Q_n(x)\right) d x \nonumber\\
& \leq \int_{x_{\mathrm{min}}}^{x_{\mathrm{max}}} |f(x)| \cdot\left|f(x)-Q_n(x)\right| d x \nonumber\\
& \leq\left(\int_{x_{\mathrm{min}}}^{x_{\mathrm{max}}} f^2(x) d x\right)^{\frac{1}{2}}\left(\int_{x_{\mathrm{min}}}^{x_{\mathrm{max}}} \left|f(x)-Q_n(x)\right|^2 d x\right)^{\frac{1}{2}} \nonumber\\
& =\|f\|_{L^2(x_{\mathrm{min}},x_{\mathrm{max}})}\left\|f-Q_n\right\|_{L^2(x_{\mathrm{min}},x_{\mathrm{max}})} \nonumber\\
& \leq(\epsilon+\epsilon \sqrt{x_{\mathrm{max}}-x_{\mathrm{min}}})\|f\|_{L^2(x_{\mathrm{min}},x_{\mathrm{max}})},
\end{align}
which implies that 
\begin{align}
    \|f\|_{L^{2}(x_{\mathrm{min}},x_{\mathrm{max}})}\leq \epsilon+\epsilon\sqrt{x_{\mathrm{max}}-x_{\mathrm{min}}}.
\end{align}

Letting $\epsilon\rightarrow 0$, we finally have
\begin{align}
    \|f\|_{L^{2}(x_{\mathrm{min}},x_{\mathrm{max}})}=0,
\end{align}
i. e. $f(x)=0$, a. e. $x\in [x_{\mathrm{min}},x_{\mathrm{max}}]$. The proof is completed.

\end{proof}
\end{widetext}


\end{document}